\numberwithin{equation}{section}
\def\s{\sigma}
\def\a{\alpha}
\newtheorem{thm}{Theorem}[section]
\newtheorem{lem}[thm]{Lemma}
\begin{document}

\title[generalized ANNNI model]
{Exact solution of a generalized ANNNI model on a Cayley tree}

\author{U. A. Rozikov}
\address{Utkir Rozikov\\
Institute of Mathematics and Information Technologies\\
29, Do'rmon Yo'li str.\\
100125, Tashkent, Uzbekistan\\ email: {\tt rozikovu@yandex.ru}}

\author{H. Akin}
\address{Hasan Akin\\
Zirve University\\
Faculty of Education\\
Department of Mathematics\\
Kizilhisar Campus\\
27260, Gaziantep, Turkey\\
email: {\tt hasanakin69@gmail.com}}

\author{S. U\~guz}
\address{Selman U\~guz\\
Department of Mathematics\\
Arts and Science Faculty\\
Harran University\\
Sanliurfa, 63120, Turkey\\
email: {\tt selmanuguz@gmail.com}}

\maketitle

\begin{abstract}
We consider the Ising model on a Cayley
tree of order two with nearest neighbor interactions and competing next nearest neighbor interactions restricted to spins belonging to the same branch of the tree. This model was studied by Vannimenus and found a new modulated phase, in addition to the paramagnetic, ferromagnetic, antiferromagnetic phases and a (+ + - -) periodic phase.  Vannimenus's results based on the recurrence equations (relating the partition function of an $n-$ generation tree to the partition function of its subsystems containing $(n-1)$ generations) and most results are obtained numerically.
In this paper we analytically study the recurrence equations and obtain some exact results: critical temperatures and curves,
number of several phases, partition function.
\newline

{\it Mathematical Subject Classification:} 82B20, 82B26.\\

{\it Keywords:} Cayley tree, configuration, Ising model, phase, Gibbs measure.
\end{abstract}

\section{Introduction}

The model considered by Vannimenus \cite{V} consists of Ising spins $(\s=\pm 1)$ on a Cayley tree of branching ratio 2, so that every spin has three nearest-neighbor (NN). Two kinds of bonds are present: NN interactions of strength $J_1$ and next-nearest-neighbor (NNN) interactions $J_2$, these being restricted to spins belonging to the same branch of the tree. The phase diagram described by Vannimenus contains a modulated phase, as found for similar models on periodic lattices, but the multicritical Lifshitz point is at zero temperature. The variation of the wavevector with temperature in the modulated phase is studied in detail, it is shown narrow commensurate steps between incommensurate regions. The behavior of the coherence length near the different transitions is also analyzed.

The Vannimenus's model was then generalized in many directions:

In \cite{IT} it was considered a model with the competing NN and NNN interactions Ising model on a Cayley tree but in their case it is allowed for all interbranch NNN interactions on the coordination number three which was discussed earlier in \cite{KT} and it were obtained in addition to the expected paramagnetic, ferromagnetic and antiferromagnetic phases, an intermediate range of $J_2/J_1<0$ values where the local magnetization has chaotic oscillatory glass-like behavior.

Another generalization is due to Mariz et al \cite{MTA} these authors studied the phase diagram for the Ising model on a Cayley tree with competing NN interactions $J_1$ and NNN interactions $J_2$ and $J_3$ in the presence of an external magnetic field.
 At vanishing temperature, the phase diagram is fully determined, for all values and signs of $J_2/J_1$ and $J_3/J_2$; in particular, it was verified that values of $J_3/J_2$ high enough favor the paramagnetic phase. At finite temperatures, several interesting features (evolution of reentrances, separation of the modulated region into two disconnected pieces, etc.) are exhibited for typical values of $J_2/J_1$ and $J_3/J_2$.

 The next generalization is considered in \cite{OAA}, where the lattice spin model with $Q$-component discrete spin variables restricted to having orientations orthogonal to the faces of $Q$-dimensional hypercube is considered on the Cayley tree (Bethe lattice). The partition function of the model with dipole-dipole and quadrupole-quadrupole interaction is presented in terms of double graph expansions. By analyzing the regions of stability of different types of fixed points of the system of recurrent relations (which is generalization of the Vannimenus's equations), the phase diagrams of the model are plotted. For $Q\leq2$ the phase diagram of the model is found to have three tricritical points.

  The next generalizations are considered in \cite{GAT}, \cite{GMP} and \cite{GTA}. These authors have studied the phase diagram for Potts model on a Cayley tree with competing NN interactions $J_1$, prolonged NNN interactions $J_p$ and one level
NNN interactions $J_o$. In \cite{GTA} the Potts model with
$J_o\ne 0$ is considered and it is shown that for some values of $J_o$ the multicritical Lifshitz point be at non-zero
temperature and proven that as soon as the same-level interaction $J_o$ is nonzero, the paramagnetic
phase found at high temperatures for $J_o = 0$ disappears, while Ising model does
not obtain such property.

But most results of the above mentioned works are obtained numerically. Thus it is natural to try
to get some these results by an analytical way.

In this paper we consider the same model which was considered by Vannimenus (not its generalization) and study its phases  by an analytical argument. Here we shall combine analytical arguments of papers  \cite{GR}, \cite{R}, \cite{RS}, \cite{SR}.

\section{Preliminaries}

The Cayley tree (Bethe lattice \cite{Ba}) $\Gamma^k$
of order $ k\geq 1 $ is an infinite tree, i.e., a graph without
cycles, such that from each vertex of which issues exactly $ k+1 $
edges. Let $\Gamma^k=(V, L),$ where $V$ is the set of vertices of
$ \Gamma^k$, $L$ is the set of edges of $ \Gamma^k$. Two vertices
$x$ and $y$ are called {\it nearest neighbors} (NN) if there exists an
edge $l\in L$ connecting them, which is denoted by $l=\langle x,y\rangle$. A
collection of the pairs $\langle x,x_1\rangle,...,\langle x_{d-1},y\rangle$ is called a {\it
path} from $x$ to $y$. Then the distance $d(x,y), x,y\in V$, on
the Cayley tree, is the number of edges in the shortest path from
$x$ to $y$.

For a fixed $x^0\in V$ we set
\begin{equation*}
W_n=\{x\in V| d(x,x^0)=n\}, \qquad V_n=\bigcup_{k=1}^n W_k.
\end{equation*}
Denote
$$
S(x)=\{y\in W_{n+1} :  d(x,y)=1 \}, \ \ x\in W_n, $$ this set is
called a set of {\it direct successors} of $x$.

The vertices $x$ and $y$ are called {\it next-nearest-neighbor} (NNN) which is denoted by
$\rangle x, y \langle$, if there exists a vertex $z\in V$ such that $x, z$ and $y, z$ are NN. We will consider only {\it prolonged} NNN $\rangle x, y \langle$, for which there
exist $n$ such that $x\in W_n$ and $y\in W_{n+2}$, this kind of NNN is considered in \cite{V} and \cite{GMP}.

We consider Ising model with competing NN and prolonged NNN interactions on a Cayley tree
where the spin takes values in the set
$\Phi:=\{-1,1\}$, and is assigned to the vertices
of the tree \cite{V}. A configuration $\sigma$ on $V$ is then defined
as a function $x\in V\mapsto\sigma (x)\in\Phi$;
the set of all configurations is $\Phi^V$.
The (formal) Hamiltonian of the model is
\begin{equation}\label{h}
H(\sigma)=-J_1\sum_{\langle x,y\rangle\in L}
\sigma(x)\sigma(y)-J_2\sum_{\rangle x,y\langle}
\sigma(x)\sigma(y),
\end{equation}
where $J_1, J_2\in R$ are coupling constants and
$\langle x,y\rangle$ stands for NN vertices and $\rangle x,y\langle$ stands for prolonged NNN.

As usual, one can introduce the notions of Gibbs measure (phase) of the Ising model with a competing interactions
on the Cayley tree \cite{Ge}, \cite{S}, \cite{Pr}.

The standard approach consists in writing down recurrence equation relating the partition function
$$Z_n=\sum_{\s_n\in\Phi^{V_n}}\exp\{-\beta H(\s_n)\},$$
of $n$-generation tree to the partition function $Z_{n-1}$ of its subsystems containing $(n-1)$ generations.
In \cite{V} the partition function $Z_n$ of the Hamiltonian (\ref{h}) is given by
\begin{equation}\label{Z}
Z_n=\left(u_1^{(n)}+u_2^{(n)}\right)^2+\left(u_3^{(n)}+u_4^{(n)}\right)^2, \ n\geq 1.
\end{equation}
Here $u^{(n)}=\left(u_1^{(n)}, u_2^{(n)}, u_3^{(n)}, u_4^{(n)}\right)$ satisfies the following recurrent equation
\begin{equation}\label{r}
\begin{array}{llll}
u_1^{(n+1)}=a \left(bu_1^{(n)}+b^{-1}u_2^{(n)}\right)^2 \\[2mm]
u_2^{(n+1)}=a^{-1}\left(bu_3^{(n)}+b^{-1}u_4^{(n)}\right)^2 \\[2mm]
u_3^{(n+1)}=a^{-1} \left(b^{-1}u_1^{(n)}+bu_2^{(n)}\right)^2 \\[2mm]
u_4^{(n+1)}=a\left(b^{-1}u_3^{(n)}+bu_4^{(n)}\right)^2, \\
\end{array}
\end{equation}
where $a=\exp(J_1\beta)$, $b=\exp(J_2\beta)$.

Consider mapping $F:u=(u_1,u_2,u_3,u_4)\in R^4_+\to F(u)=(u_1',u_2',u_3',u_4')\in R^4_+$ defined by
\begin{equation}\label{r1}
\begin{array}{llll}
u_1'=a \left(bu_1+b^{-1}u_2\right)^2 \\[2mm]
u_2'=a^{-1}\left(bu_3+b^{-1}u_4\right)^2 \\[2mm]
u_3'=a^{-1} \left(b^{-1}u_1+bu_2\right)^2 \\[2mm]
u_4'=a\left(b^{-1}u_3+bu_4\right)^2. \\
\end{array}
\end{equation}
Then the recurrent equation (\ref{r}) can be written as $u^{(n+1)}=F(u^{(n)})$, $n\geq 0$ which in the theory of dynamical systems
is called {\it trajectory} of the initial point $u^{(0)}$ under action of the mapping $F$.
Thus  asymptotic behavior of $Z_n$ for $n\to\infty$ can be determined by
values of $\lim u^{(n)}$ i.e., trajectory of $u^{(0)}$ under action of $F$. In this paper we study the trajectory
(dynamical system) for a given initial point $u^{(0)}\in R^4$.

\section{Dynamics of $F$}
\subsection{Fixed points} In this subsection we are going to define fixed points, i.e., solutions to $F(u)=u$.

Denote Fix$(F)=\{u: F(u)=u\}$.

We introduce the new variables $\alpha=\sqrt{a}$, \ $v_i=\sqrt{u_i}$, $i=1,2,3,4$. Then the equation $F(u)=u$ becomes
as
\begin{equation}\label{v}
\begin{array}{llll}
v_1=\a \left(bv^2_1+b^{-1}v^2_2\right) \\[2mm]
v_2=\a^{-1}\left(bv^2_3+b^{-1}v^2_4\right) \\[2mm]
v_3=\a^{-1} \left(b^{-1}v^2_1+bv^2_2\right) \\[2mm]
v_4=\a\left(b^{-1}v_3^2+bv^2_4\right). \\
\end{array}
\end{equation}

\begin{lem}\label{l1}
If a vector $u$ is a fixed point of $F$ then $u\in M_1=\{u=(u_1,u_2,u_3,u_4)\in R^4_+: \ u_1=u_4,\  u_2=u_3\}$ or
$u\in M_2=\{u=(u_1,u_2,u_3,u_4)\in R^4_+: \ \sqrt{u_1}+\sqrt{u_4}=\varphi(\sqrt{u_2}+\sqrt{u_3})\}$,
where $\varphi(x)={1+a^{-1}bx\over ab+(b^2-b^{-2})x}$.
\end{lem}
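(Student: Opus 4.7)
The plan is to exploit the pairwise symmetry of the system (\ref{v}): equations~1 and~4 carry prefactor $\a$, equations~2 and~3 carry $\a^{-1}$, and the right-hand sides couple indices $\{1,2\}$ against $\{3,4\}$. This invites the symmetry-adapted coordinates $P = v_1 + v_4$, $Q = v_2 + v_3$, $A = v_1 - v_4$, $B = v_2 - v_3$. Note that (after taking square roots) the condition $u \in M_1$ is precisely $A = B = 0$, whereas the defining relation of $M_2$ is a single algebraic identity between $P$ and $Q$.

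First I would subtract equation~4 from equation~1 and equation~3 from equation~2 in (\ref{v}). Using the factorizations $v_1^2 - v_4^2 = AP$ and $v_2^2 - v_3^2 = BQ$, both resulting identities become bilinear in $(A,B)$ with coefficients depending only on $(P,Q)$. After rearranging, this is a homogeneous linear system
\begin{equation*}
A\,(1 - \a b P) = \a b^{-1} Q\,B, \qquad B\,(1 + \a^{-1} b Q) = -\a^{-1} b^{-1} P\,A.
\end{equation*}

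For the case analysis, $u \in R^4_+$ forces $v_i > 0$ and hence $P, Q > 0$, so either displayed equation immediately shows that $A = 0$ if and only if $B = 0$. Hence either $A = B = 0$, placing $u \in M_1$; or $(A,B) \neq (0,0)$ is a nontrivial solution, which forces the determinant of the $2\times 2$ coefficient matrix to vanish. Expanding that single determinantal condition and solving for $P$ as a M\"obius-type function of $Q$ delivers $P = \varphi(Q)$, i.e.\ $u \in M_2$.

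The main obstacle is purely the algebraic bookkeeping inside the determinant: one must verify that the $PQ$ cross-terms arising from $(1 - \a b P)(1 + \a^{-1} b Q)$ and from the off-diagonal entries combine cleanly into the coefficient $-(b^2 - b^{-2})$ visible in the denominator of $\varphi$, and that no spurious factor of $PQ$ survives. Once the symmetry-adapted variables are in place there is no conceptual difficulty, since strict positivity of $P$ and $Q$ on $R^4_+$ makes the dichotomy between $M_1$ and $M_2$ immediate and the vanishing of the determinant is equivalent, by elementary algebra, to the claimed relation.
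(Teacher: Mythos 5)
Your proposal is correct and follows essentially the same route as the paper: subtract the paired equations of (\ref{v}), factor the differences of squares to get a homogeneous linear system in $v_1-v_4$ and $v_2-v_3$, use positivity to get the dichotomy, and set the $2\times 2$ determinant to zero to obtain the relation $(b^2-b^{-2})PQ+\a bP-\a^{-1}bQ-1=0$ defining $M_2$. The only nitpick is that the implication $B=0\Rightarrow A=0$ follows cleanly only from your second displayed equation (the first leaves open the possibility $\a bP=1$), which is exactly the equation the paper invokes for this step.
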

\begin{proof} From (\ref{v}) we get
\begin{equation}\label{s}
\left\{\begin{array}{ll}
(v_1-v_4)[\a b(v_1+v_4)-1]+(v_2-v_3)[\a b^{-1}(v_2+v_3)]=0\\[2mm]
(v_1-v_4)[(\a b)^{-1}(v_1+v_4)]+(v_2-v_3)[1+\a^{-1}b(v_2+v_3)]=0.\\[2mm]
\end{array}\right.
\end{equation}
If $v_1=v_4$ (resp. $v_2=v_3$) from the second equation of (\ref{s}) we get $v_2=v_3$ (resp. $v_1=v_4$).
Thus $v_1=v_4$ if and only if $v_2=v_3$. Assume now
$v_1\ne v_4$ and  $v_2\ne v_3$ then system (\ref{s}) can be reduced to the equation
\begin{equation}\label{e}
(b^2-b^{-2})(v_1+v_4)(v_2+v_3)+\a b(v_1+v_4)-\a^{-1}b(v_2+v_3)-1=0.
\end{equation}
The equation (\ref{e}) gives $v_1+v_4=\varphi(v_2+v_3)$.
\end{proof}
Let us first study fixed points of $F$ which belong in $M_1$: the condition  $u_1=u_4,\  u_2=u_3$ reduces the equation $F(u)=u$ to
the following equation
\begin{equation}\label{e1}
x=f(x)\equiv a^2\left({1+b^2x\over b^2+x}\right)^2,
\end{equation}
 where $x={u_1\over u_2}$.
 Denote
 $$\tilde{a}=a^{-2}b^{-6}, \ \ \tilde{b}=b^4, \ \ y=b^2x.$$
 The following lemma gives full description of solutions to (\ref{e1}).
\begin{lem}\label{l2}(Cf. Lemma 10.7 in \cite{Pr}). Equation (\ref{e1}) has a unique positive, stable solution if
$\tilde{b}\leq 9$. If $\tilde{b} > 9$, then there exist $\nu_1(\tilde{b})$ and $\nu_2(\tilde{b})$ such that the conditions $0 < \nu_1(\tilde{b}) < \nu_2(\tilde{b})$ are satisfied and equation (\ref{e1}) has three
solutions, $x^*_1<x^*_2<x^*_3$,\, $x^*_1$ and $x^*_3$ are stable and $x_2^*$ is unstable,
if $\nu_1(\tilde{b}) < \tilde{a} < \nu_2(\tilde{b})$ and has two solutions, $x^*_1$, $x^*_2$, $x^*_1$ is unstable (saddle)
and $x_2^*$ is stable, if $\tilde{a} =\nu_1(\tilde{b})$ or
$\tilde{a} =\nu_2(\tilde{b})$. In this case, we have
$$\nu_i(\tilde{b}) ={1\over y_i}
\left({1 + y_i\over \tilde{b}+y_i}\right)^2,$$
where $y_1$ and $y_2$ are the solutions of the equation
$y^2 + (3-\tilde{b})y +\tilde{b} = 0$.
\end{lem}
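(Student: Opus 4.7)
The plan is to exploit the substitution $y=b^2x$ flagged in the lemma: equation (\ref{e1}) then reads $\tilde a=g(y)$ with $g(y):=(1+y)^2/(y(\tilde b+y)^2)$, so the whole statement reduces to intersecting the graph of $g$ on $(0,\infty)$ with the horizontal line of height $\tilde a$ and analyzing stability at each intersection. A logarithmic-derivative calculation
\[
\frac{g'(y)}{g(y)}=\frac{2}{1+y}-\frac{1}{y}-\frac{2}{\tilde b+y},
\]
multiplied through by the positive quantity $y(1+y)(\tilde b+y)$, shows that $\operatorname{sgn} g'(y)=-\operatorname{sgn} p(y)$, where $p(y):=y^2+(3-\tilde b)y+\tilde b$ is exactly the polynomial appearing in the lemma; the boundary behaviour $g(0^+)=+\infty$, $g(+\infty)=0$ is immediate.

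The counting of solutions is then a case analysis on $p$, whose discriminant factors as $(\tilde b-1)(\tilde b-9)$. When $\tilde b\le 9$ one checks by discriminant and Vieta that $p$ has no positive root, so $g$ is strictly decreasing on $(0,\infty)$ and $\tilde a=g(y)$ has a unique positive solution. When $\tilde b>9$, $p$ has two positive roots $y_1<y_2$, so $g$ decreases on $(0,y_1)$, increases on $(y_1,y_2)$, decreases on $(y_2,\infty)$, attaining a local minimum $\nu_1:=g(y_1)$ and a local maximum $\nu_2:=g(y_2)$ with $0<\nu_1<\nu_2$. Sweeping the horizontal line across these values yields the claimed solution counts in the regimes $\tilde a<\nu_1$, $\tilde a=\nu_1$, $\nu_1<\tilde a<\nu_2$, $\tilde a=\nu_2$, $\tilde a>\nu_2$, and the displayed formula for $\nu_i(\tilde b)$ is nothing but $g(y_i)$.

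For stability I would pass to the conjugate map $\tilde F(y):=\tilde a^{-1}((1+y)/(\tilde b+y))^2$, differentiate, and use the fixed-point identity $\sqrt{\tilde a\,y^*}=(1+y^*)/(\tilde b+y^*)$ to reduce the multiplier to $\tilde F'(y^*)=2(\tilde b-1)y^*/((1+y^*)(\tilde b+y^*))$; the key algebraic observation is then
\[
1-\tilde F'(y^*)=\frac{p(y^*)}{(1+y^*)(\tilde b+y^*)},
\]
so $\tilde F'(y^*)<1\iff p(y^*)>0$, meaning stability is equivalent to $y^*$ lying on a decreasing branch of $g$. This gives the unique solution as stable when $\tilde b\le 9$, $x_1^*$ and $x_3^*$ stable and $x_2^*$ unstable in the three-solution case, and marginal (saddle-node) behaviour at $\tilde a=\nu_i$ where $p(y^*)=0$. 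The main obstacle is producing the identity $(1-\tilde F'(y^*))(1+y^*)(\tilde b+y^*)=p(y^*)$ cleanly; once it is in hand, the counting and stability statements follow from a single analysis of $p$. A minor technical check of $\tilde F'(y^*)>-1$ amounts to the sign of the analogous quadratic $y^2+(3\tilde b-1)y+\tilde b$ and is handled by the same discriminant-plus-Vieta template.
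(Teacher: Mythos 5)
The paper gives no proof of this lemma at all --- it simply cites Lemma 10.7 of Preston \cite{Pr} --- so your argument is a genuinely self-contained substitute rather than a variant of the paper's route. The core of it checks out. The substitution $y=b^2x$ does turn (\ref{e1}) into $\tilde{a}=g(y)$ with $g(y)=(1+y)^2/\bigl(y(\tilde{b}+y)^2\bigr)$; the logarithmic derivative identity gives $y(1+y)(\tilde{b}+y)\,g'(y)/g(y)=-p(y)$ with $p(y)=y^2+(3-\tilde{b})y+\tilde{b}$ exactly as you claim; the discriminant of $p$ is $(\tilde{b}-1)(\tilde{b}-9)$, and Vieta shows $p$ has no positive root for $\tilde{b}\leq 9$ and two positive roots $y_1<y_2$ for $\tilde{b}>9$, which yields the solution counts and identifies $\nu_i=g(y_i)$. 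The stability identity also verifies: at a fixed point, $\tilde{F}'(y^*)=2(\tilde{b}-1)y^*/\bigl((1+y^*)(\tilde{b}+y^*)\bigr)$ and $(1-\tilde{F}'(y^*))(1+y^*)(\tilde{b}+y^*)=p(y^*)$, so the condition $\tilde{F}'(y^*)<1$ is precisely $p(y^*)>0$, i.e.\ $y^*$ on a decreasing branch of $g$. This is cleaner and more informative than an appeal to \cite{Pr}.

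The one place you are too quick is the ``minor technical check'' $\tilde{F}'(y^*)>-1$. Your own computation shows $(1+\tilde{F}'(y^*))(1+y^*)(\tilde{b}+y^*)=y^{*2}+(3\tilde{b}-1)y^*+\tilde{b}$, whose discriminant is $(9\tilde{b}-1)(\tilde{b}-1)$; for $\tilde{b}<1/9$ this quadratic has two positive roots (sum $1-3\tilde{b}>0$, product $\tilde{b}>0$), and since $g$ is then a decreasing bijection of $(0,\infty)$ onto $(0,\infty)$, there is a genuine range of $\tilde{a}$ for which the unique fixed point lands between those roots and satisfies $\tilde{F}'(y^*)<-1$, i.e.\ is repelling. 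So the stability clause of the lemma for $\tilde{b}\leq 9$ is not literally true when $\tilde{b}<1/9$; this is in fact consistent with the paper's own Lemma \ref{l4}(2) and Lemma \ref{l3}, where the threshold $b\leq\sqrt{3^{-1}}$ is exactly $\tilde{b}\leq 1/9$ and period-two orbits appear. You should either restrict the stability assertion to $\tilde{b}\geq 1/9$ (which covers the only regime, $J_2>0$ hence $\tilde{b}>1$, in which the lemma is actually invoked in Theorem \ref{t1}), or state explicitly that ``stable'' must be weakened for small $\tilde{b}$. With that caveat made explicit, the proof is complete.
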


Now we shall give some argument to find fixed points
of $F$ which belong in $M_2$. Again use variables $v_i$, $i=1,2,3,4$, assume $v_2+v_3=C$, with
$C>\max\{0, {\alpha b\over b^{-2}-b^2}\}$. Using Lemma \ref{l1} we get $v_3=C-v_2$ and $v_4=\varphi(C)-v_1$. Then from the first equation of
(\ref{v}) we obtain $v_2=\sqrt{b(\a^{-1}v_1-bv_1^2)}$ and from the second equation of (\ref{v}) we have
$P_4(v_1)=0$ with a polynomial $P_4$ of degree 4, coefficients of which depend on parameters $\alpha$, $b$ and $C$.
Thus a quartic equation can be obtained. Such an equations can be solved using known formulas
(see  http://mathworld.wolfram.com/QuarticEquation.html),
since we will have some complicated formulas for the coefficients and the solutions,
we do not present the solution here.

\subsection{Periodic points} A point $u$ in $R^4_+$ is called {\it periodic point} of $F$ if there exists $p$ so that
$F^p(u) =u$ where $F^p$ is the $p$th iterate of $F$. The smallest positive integer $p$ satisfying the
above is called the {\it prime period} or least period of the point $u$. Denote by Per$_p(F)$ the set of periodic points with prime period $p$.

Note that the set $M_1$ is invariant with respect to $F$ i.e., $F(M_1)\subset M_1$. In this subsection we shall
describe some periodic points of $F$ which belong in $M_1$.

Let us first describe periodic points with $p=2$ on $M_1$, in this case the equation $F(F(u))=u$
can be reduced to description of 2-periodic points of the function $f$ defined in (\ref{e1}) i.e., to solution of the equation
\begin{equation}\label{p}
f(f(x))=x.
\end{equation}

Note that the fixed points of $f$ are solutions to (\ref{p}), to find other solutions we consider the equation
$${f(f(x))-x\over f(x)-x}=0,$$
simple calculations show  that the last equation is equivalent to the following
\begin{equation}\label{x}
b^4(1+a^2b^2)^2x^2+a^2\{b^8+2(a^{-2}+a^{2})b^6+4b^4-1\}x+b^4(a^2+b^2)^2=0.
\end{equation}
The equation has two positive solutions iff $B<0$ and $D>0$ where
$$B=a^2\{b^8+2(a^{-2}+a^{2})b^6+4b^4-1\}, \ \ D=B^2-(2b^4(a^2+b^2)(1+a^2b^2))^2.$$
 We have
 $$B=\left\{\begin{array}{ll}
 >0,\ \ \mbox{if} \ \ b\geq \sqrt{\sqrt{2}-1} \ \ \mbox{or} \ \ b\leq \sqrt{\sqrt{2}-1},\, a^2\in (0,\bf{b}^-)\cup(\bf{b}^+,+\infty)\\[3mm]
 0,\ \ \mbox{if} \ \ b\leq\sqrt{\sqrt{2}-1},\ \ a^2=\bf{b}^-, \bf{b}^+\\[3mm]
 <0, \ \ \mbox{if} \ \ b\leq \sqrt{\sqrt{2}-1}, \, a^2\in (\bf{b}^-, \bf{b}^+)\\
 \end{array}\right.$$
 where
 $${\bf b}^{\pm}={1-4b^4-b^8\pm(1-b^4)\sqrt{(b^4-1)^2-4b^4}\over 4b^6};$$
 $$D=-a^2(b^4-1)^2(4b^6a^4+(3b^8+6b^4-1)a^2+4b^6)=$$
 \begin{equation}\label{D}\left\{\begin{array}{lll}
 <0, \ \ \mbox{if} \ \ \sqrt{3^{-1}}<b, \, b\ne 1 \ \ \mbox{or} \ \ b\leq \sqrt{3^{-1}}, a^2\in (0,b^-_*)\cup(b^+_*,+\infty)\\[3mm]
 0, \ \ \mbox{if} \ \ b=1 \ \ \mbox{or}\ \ b\leq \sqrt{3^{-1}},\, a^2=b^-_*,b^+_*\\[3mm]
 >0, \ \ \mbox{if} \ \ b\leq \sqrt{3^{-1}}, a^2\in (b^-_*,b^+_*)\\
  \end{array}\right.
  \end{equation}
  where $$b_*^{\pm}={1-3b^8-6b^4\pm\sqrt{(b^4-1)^3(9b^4-1)}\over 8b^6}.$$
  One can check that $\sqrt{3^{-1}}<\sqrt{\sqrt{2}-1}$, and for $b\leq \sqrt{3^{-1}}$ one has  ${\bf b}^-\leq b^-_*$ and $b_*^+\leq {\bf b}^+$.
  Thus we have proved the following
 \begin{lem}\label{l3} The solutions to (\ref{p}) which are different from fixed points of $f$ are vary as follows:

 1) If $\sqrt{3^{-1}}<b$, $b\ne 1$ or $b\leq \sqrt{3^{-1}}$, $a^2\in (0,b^-_*)\cup(b^+_*,+\infty)$ then the equation
 (\ref{x}) has no positive solution.

 2) If $b=1$ or $b\leq \sqrt{3^{-1}}$ and $a^2=b^-_*,b^+_*$ then the equation (\ref{x}) has unique
 positive solution $x_1={-B\over 2b^4(1+a^2b^2)^2}$.

 3) If  $b\leq \sqrt{3^{-1}}$, $a^2\in (b^-_*,b^+_*)$ then there are two positive solutions $x_{\pm}={-B\pm\sqrt{D}\over 2b^4(1+a^2b^2)^2}$ to (\ref{x}).
  \end{lem}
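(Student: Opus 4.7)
The plan is to treat equation (\ref{x}) as a quadratic in $x$ of the form $Ax^2+Bx+C=0$ with $A=b^4(1+a^2b^2)^2>0$ and $C=b^4(a^2+b^2)^2>0$, and to read off positivity of its roots purely from the signs of $B$ and of the discriminant $D=B^2-4AC$. By Vieta's formulas the product of the (possibly complex) roots equals $C/A>0$, so any real roots share a common sign; positive roots therefore occur iff $D\ge 0$ and $-B/A>0$, i.e., $B<0$. The three cases of the lemma are thus expected to correspond to $D<0$ (no positive root), $D=0$ together with $B<0$ (the repeated positive root $-B/(2A)$), and $D>0$ together with $B<0$ (the two positive roots $(-B\pm\sqrt{D})/(2A)$).

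To handle $B$ I would rewrite it as a quadratic in $a^{2}$,
\[
B \;=\; 2b^{6}\,(a^{2})^{2}+(b^{8}+4b^{4}-1)\,a^{2}+2b^{6},
\]
whose leading and constant coefficients are both positive. Hence $B<0$ forces the inner discriminant $\Delta_B:=(b^{8}+4b^{4}-1)^{2}-16b^{12}$ to be strictly positive, in which case $B<0$ is equivalent to $a^{2}\in(\mathbf{b}^{-},\mathbf{b}^{+})$, with $\mathbf{b}^{\pm}$ produced by the quadratic formula. A direct expansion yields the factorisation $\Delta_B=(b^{4}-1)^{2}\bigl((b^{4}-1)^{2}-4b^{4}\bigr)$, whose second factor is nonnegative precisely on $b\le\sqrt{\sqrt{2}-1}$ (and, symmetrically, on $b\ge\sqrt{1+\sqrt{2}}$, which is not relevant here); the trichotomy for $B$ announced in the excerpt follows.

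For $D$ I would compute $D=B^{2}-4AC$ directly, using $4AC=\bigl(2b^{4}(1+a^{2}b^{2})(a^{2}+b^{2})\bigr)^{2}$ to organise the expansion. The target is the closed form
\[
D \;=\; -a^{2}(b^{4}-1)^{2}\bigl(4b^{6}a^{4}+(3b^{8}+6b^{4}-1)a^{2}+4b^{6}\bigr),
\]
so that the sign of $D$ is opposite to that of the quadratic-in-$a^{2}$ factor and $D$ vanishes at $b=1$. The discriminant of that inner quadratic I expect to factor as $(3b^{8}+6b^{4}-1)^{2}-64b^{12}=(b^{4}-1)^{3}(9b^{4}-1)$, which is nonnegative iff $b=1$ or $b\le\sqrt{1/3}$; on the latter range the quadratic formula supplies the endpoints $b^{\pm}_{\ast}$, the inner factor is negative on $(b^{-}_{\ast},b^{+}_{\ast})$, and consequently $D>0$ there, yielding exactly (\ref{D}).

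Finally I would verify the compatibility inequalities $\sqrt{1/3}<\sqrt{\sqrt{2}-1}$ and, for $b\le\sqrt{1/3}$, $\mathbf{b}^{-}\le b^{-}_{\ast}<b^{+}_{\ast}\le\mathbf{b}^{+}$, already flagged in the excerpt; these ensure that the region where $D\ge 0$ is contained in the region where $B<0$, so the positivity count reduces to the sign of $D$ alone and the three cases of the lemma are obtained. The main obstacle is purely algebraic: extracting the closed-form factorisation $D=-a^{2}(b^{4}-1)^{2}(\cdots)$ out of $B^{2}-4AC$, and the parallel identity $(3b^{8}+6b^{4}-1)^{2}-64b^{12}=(b^{4}-1)^{3}(9b^{4}-1)$, since these two identities are precisely what pin down the critical thresholds $b=1$ and $b=\sqrt{1/3}$ appearing in the statement.
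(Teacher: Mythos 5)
Your proposal follows essentially the same route as the paper: the computations displayed immediately before the lemma \emph{are} the paper's entire proof, and they consist of exactly your argument --- treat (\ref{x}) as a quadratic with positive leading and constant coefficients, so positive roots exist iff $B<0$ and $D\geq 0$, then locate the sign changes of $B$ and $D$ via the two factorizations you state, both of which are correct. The only slips are minor: $(b^4-1)^3(9b^4-1)$ is also nonnegative for $b\geq 1$, where you must separately note that $3b^8+6b^4-1>0$ forces both roots of the inner quadratic in $a^2$ to be negative so that $D<0$ still holds for $b>1$; and your own criterion ``$B<0$ is necessary for a positive root'' in fact exposes a flaw in case 2) of the lemma as stated, since at $b=1$ one has $D=0$ but $B=2(a^2+1)^2>0$, so the unique double root is $x_1=-1$, which is not positive.
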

The following lemma gives useful properties of the function $f$.
 \begin{lem}\label{l4} 1) If $b>1$ then the sequence $x_n=f(x_{n-1})$, $n=1,2,\dots$ converges for any initial point $x_0>0$, where $f$ is defined in (\ref{e1}).

 2) If $b<1$ then the sequence $y_n=g(y_{n-1})$, $n=1,2,\dots$ converges for any initial point $y_0>0$, where $g(x)=f(f(x))$.
  \end{lem}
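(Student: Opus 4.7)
The plan is to prove both parts by the standard monotone-iteration argument. The crucial observation will be that the sign of $f'(x)$ on $(0,\infty)$ is the sign of $b^4-1$: writing $f(x)=a^2 h(x)^2$ with the M\"obius factor $h(x)=(1+b^2x)/(b^2+x)$, a one-line computation gives $h'(x)=(b^4-1)/(b^2+x)^2$, so that $f'(x)=2a^2 h(x) h'(x)$ carries the sign of $b-1$ for $x>0$. I will also need that $f$ sends $(0,\infty)$ into a bounded interval: since $h$ maps $(0,\infty)$ onto the open interval with endpoints $1/b^2$ and $b^2$, the image $f((0,\infty))$ lies in the open interval with endpoints $a^2/b^4$ and $a^2 b^4$.

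For part (1), the case $b>1$ makes $f$ strictly increasing on $(0,\infty)$. Hence the orbit $(x_n)$ is automatically monotone: the inequality $x_1\ge x_0$ propagates through $f$ to give $x_{n+1}\ge x_n$ for all $n$, and the opposite inequality similarly yields a non-increasing orbit. Combined with the boundedness observation above, the monotone convergence theorem supplies the limit.

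For part (2), the case $b<1$ makes $f$ strictly decreasing, so I will invoke the chain rule to conclude that $g=f\circ f$ satisfies $g'(x)=f'(f(x))\,f'(x)>0$, i.e., $g$ is strictly increasing. Since $g$ inherits the boundedness of $f$, the argument of part (1) applied to $g$ and the sequence $y_n=g(y_{n-1})$ finishes the proof.

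I do not foresee a real obstacle: the whole argument reduces to computing one derivative and invoking monotone convergence. The only mild subtlety is that the initial point $x_0>0$ can be arbitrary and in particular need not lie in the invariant bounded interval, but after one step the orbit enters that interval, which is all the monotone convergence theorem requires.
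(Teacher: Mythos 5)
Your proof is correct, and it rests on the same engine as the paper's --- monotonicity of $f$ (resp.\ $g=f\circ f$) plus the monotone convergence theorem --- but the execution is genuinely leaner. The paper first imports the fixed-point classification from Lemma \ref{l2}, partitions $(0,+\infty)$ by the fixed points $x_1^*<x_2^*<x_3^*$, and verifies monotone convergence of the orbit separately on each subinterval (with a parenthetical remark that the one- and two-fixed-point cases are ``more simple''). You bypass all of that: for an increasing map every orbit is automatically monotone (the sign of $x_1-x_0$ propagates), and your explicit observation that $f$ maps $(0,\infty)$ into the bounded interval with endpoints $a^2b^{-4}$ and $a^2b^4$ supplies the boundedness that the paper instead extracts from the ordering relative to the fixed points. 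What your route buys is uniformity --- no case analysis on the number of fixed points and no dependence on Lemma \ref{l2}. What the paper's route buys is extra information that is used later: by localizing each orbit between consecutive fixed points it identifies \emph{which} attractive fixed point a given $x_0$ converges to (e.g.\ $x_0>x_2^*$ implies $x_n\to x_3^*$), and this basin-of-attraction description underlies the classification of the sets $S_x$ and $S^{\rm per}_y$ in Theorem \ref{t3}. For the statement of Lemma \ref{l4} as written, your argument is complete and, if anything, cleaner.
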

  \begin{proof} 1) For $b>1$ we have $f'(x)=2a^2(b^4-1){1+b^2x\over (b^2+x)^3}>0$ i.e.,
  $f$ is an increasing function. Here we consider the case when the function $f$ has
  three fixed points $x^*_i$, $i=1,2,3$ (see Lemma \ref{l2}. This proof is more simple for cases when $f$ has one
 or two fixed points).  We have that the point $x_2^*$
is repeller i.e., $f'(x_2^*)>1$ and the points $x_1^*, x^*_3$ are attractive i.e., $f'(x_1^*)<1$, $f'(x_3^*)<1$.
Now we shall take arbitrary $x_0>0$ and prove that $x_n =f(x_{n-1})$, $n\geq 1$
converges as $n\to\infty$. Consider the following partition $(0,+\infty) = (0, x^*_1)\cup\{x^*_1\}\cup(x^*_1, x^*_2)\cup \{x^*_2\}\cup(x^*_2, x^*_3)\cup \{x^*_3\}\cup(x^*_3,+\infty)$. For any  $x\in (0, x^*_1)$ we have $x < f(x) < x^*_1$, since $f$ is an increasing function, from the last inequalities we get  $x<f(x) < f^2(x) < f(x^*_1)=x_1^*$ iterating this argument we obtain $f^{n-1}(x)<f^{n}(x)<x^*_1$, which
for any $x_0\in (0,x^*_1)$ gives $x_{n-1}<x_n<x^*_1$ i.e., $x_n$ converges and its limit is a fixed point of $f$, since $f$
has unique fixed point $x^*_1$ in $(0, x^*_1]$ we conclude that the limit is $x^*_1$. For $x\in (x^*_1, x^*_2)$
we have $x^*_2>x >f(x)>x^*_1$, consequently $x_n > x_{n+1}$ i.e., $x_n$ converges
and its limit is again $x^*_1$. Similarly, one can show that if $x_0>x^*_2$ then $x_n\to x^*_3$ as $n\to \infty$.

2) For $b<1$ we have $f$ is decreasing and has unique fixed point $x_1$ which is repelling,
but $g$ is increasing since $g'(x)=f'(f(x))f'(x)>0$. By Lemma \ref{l3} we have that $g$ has at most
three fixed points (including $x_1$). The point $x_1$ is repelling for $g$ too, since $g'(x_1)=f'(f(x_1))f'(x_1)=(f'(x_1))^2>1$. But fixed points $x_-$, $x_+$ (see Lemma \ref{l3}) of $g$ are attractive. Hence one can repeat the same argument of the proof of the part 1) for the increasing function $g$ and complete the proof.  \end{proof}
  Lemma \ref{l3} shows that if $b>1$ i.e., $J_2>0$ then there is no any 2-periodic trajectory of $F$ on $M_1$. Moreover, the following lemma says that if $J_2>0$ then there is no any periodic trajectory on $M_1$.

  \begin{lem}\label{l5}
  1) If $J_2>0$ then for any $p\geq 2$ the equation $F^p(u)=u$ has no solution $u\in M_1\setminus {\rm Fix}(F)$.

  2) If $J_2<0$ then for any $p\geq 3$ the equation $F^p(u)=u$ has no solution $u\in M_1\setminus ({\rm Fix}(F)\cup {\rm Per}_2(F))$.
  \end{lem}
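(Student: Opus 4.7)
The plan is to reduce the action of $F$ on the invariant set $M_1$ to a two-dimensional system and then exploit monotonicity to restrict the possible periods. Since on $M_1$ every point has the form $(u_1,u_2,u_2,u_1)$, the natural coordinates are the ratio $x=u_1/u_2$ and the scale $s=u_2$. A direct substitution in (\ref{r1}) yields
\begin{equation*}
x'=f(x),\qquad s'=c(x)\,s^2,\qquad c(x):=a^{-1}(b+b^{-1}x)^2>0,
\end{equation*}
with $f$ as in (\ref{e1}). Thus $F^p(u)=u$ on $M_1$ is equivalent to $f^p(x_0)=x_0$ together with the compatibility condition $s_p=s_0$ along the scale orbit, and the question reduces to periodic points of these one-dimensional maps.

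For Part 1 ($J_2>0$, i.e.\ $b>1$), $f$ is strictly increasing on $(0,+\infty)$ by Lemma~\ref{l4}(1). A continuous strictly increasing map admits no periodic points other than fixed points, so $x_n\equiv x^*$ must be a fixed point of $f$. The scale recurrence then collapses to iteration of the single strictly increasing map $s\mapsto c(x^*)s^2$, which has exactly one positive fixed point $s^*=1/c(x^*)$, while all other positive orbits escape to $0$ or $\infty$. Hence $s_p=s_0$ forces $s_0=s^*$, and $u\in{\rm Fix}(F)$.

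For Part 2 ($J_2<0$, i.e.\ $b<1$), $f$ is strictly decreasing by Lemma~\ref{l4}(2), so $f\circ f$ is strictly increasing and every $f$-periodic orbit has prime period $1$ or $2$. The fixed-point case is handled exactly as in Part~1. If $x_0$ lies on a prime $2$-cycle of $f$, then $f^p(x_0)=x_0$ forces $p$ even; setting $c_0=c(x_0)$ and $c_1=c(f(x_0))$ the two-step scale recurrence is $s_{n+2}=c_1c_0^2\,s_n^4$, again a strictly increasing map on $(0,+\infty)$. Its only positive periodic point is therefore its unique fixed point $s_0=(c_1c_0^2)^{-1/3}$, and for this value one checks directly that $(x_n,s_n)$ closes up after two steps, so $u\in{\rm Per}_2(F)$.

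The only real obstacle is performing the reduction $(x,s)\mapsto(f(x),c(x)s^2)$ cleanly and confirming that $c(x)>0$ (which is automatic), so that the scale dynamics is genuinely a strictly increasing one-dimensional map in both the constant-$x$ and alternating-$x$ situations; once this is in place, both parts of the lemma follow from the elementary observation that on the positive half-line a strictly increasing continuous map admits no non-trivial periodic orbits, together with the standard fact that a strictly decreasing map cannot support prime periods larger than $2$.
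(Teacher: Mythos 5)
Your proof is correct, and at its core it follows the same reduction as the paper: restrict to the invariant set $M_1$, pass to the ratio $x=u_1/u_2$, which evolves by the one--dimensional map $f$ of (\ref{e1}), and use monotonicity of $f$ (for $b>1$) resp.\ of $f\circ f$ (for $b<1$) to exclude prime periods above $1$ resp.\ $2$ at the level of the ratio. The paper's own proof stops at that level: it invokes Lemma~\ref{l4} (convergence of $x_n$ for every initial point) and concludes ``$x_n$ is not periodic, contradiction.'' Strictly speaking, a convergent periodic sequence is constant, so that argument only yields $x_0\in{\rm Fix}(f)$ (resp.\ $x_0\in{\rm Fix}(g)$); it does not by itself rule out a genuinely periodic $F$-orbit lying over a fixed ratio, with only the overall scale $u_2^{(n)}$ cycling. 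Your additional coordinate $s=u_2$ with the recursion $s'=c(x)s^2$, $c(x)=a^{-1}(b+b^{-1}x)^2$ (which is indeed what (\ref{r1}) gives on $M_1$), is exactly the missing piece: the maps $s\mapsto Cs^2$ and $s\mapsto Cs^4$ are strictly increasing on $(0,+\infty)$ with a unique positive fixed point, so the scale cannot cycle nontrivially either, and the orbit is forced into ${\rm Fix}(F)$ or ${\rm Per}_2(F)$ as claimed. Also, your use of the elementary fact that an increasing (resp.\ decreasing) interval map has no periodic points of prime period $>1$ (resp.\ $>2$) is cleaner than routing through the convergence statement of Lemma~\ref{l4}. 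In short, your argument is not merely an alternative route but a more complete version of the paper's proof; the only overhead is the routine verification of the $(x,s)$-reduction, which checks out.
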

  \begin{proof} 1) Assume there is a solution $u^{(0)}\in M_1\setminus {\rm Fix}(F)$ then we get $p-$periodic trajectories
  $u^{(n+p)}_i=u^{(n)}_i$, $i=1,2$; $n=0,1,2,\dots$. Since the set $M_1$ is invariant with respect to $F$, we obtain
  $$x_{n+p}={u^{(n+p)}_1\over u^{(n+p)}_2}={u^{(n)}_1\over u^{(n)}_2}=x_n=f^n(x_0).$$ This is a contradiction, since by
  Lemma \ref{l4} we have $x_n$ is not periodic.

  2) Assume there is a solution $u^{(0)}\in M_1\setminus ({\rm Fix}(F)\cup {\rm Per}_2(F))$ then we have
  $$y_{n+p}=x_{2n+2p}={u^{(2n+2p)}_1\over u^{(2n+2p)}_2}={u^{(2n)}_1\over u^{(2n)}_2}=y_n=f^{2n}(y_0).$$ This is a contradiction, since by
  Lemma \ref{l4} we have $y_n$ is not periodic.
  \end{proof}

\section{Exact values}

Starting from random initial condition (with $u^{(0)}_1\ne u^{(0)}_4$ and $u^{(0)}_2\ne u^{(0)}_3$), one iterates the recurrence
equations (\ref{r}) and observes their behavior after large number of iterations. In the simplest situation a fixed point $u^*=(u_1^*,u_2^*,u_3^*,u_4^*)\in R^4_+$ is reached. It corresponds to (see \cite{V}):

{\it a paramagnetic phase} if $u^*\in M_1$;

{\it a ferromagnetic phase} if $u^*\in M_2$.

If the iterations give a cyclic (periodic), say with period $p$,  sequence then the corresponding
phase is called ($p$-){\it commensurate phase}. Finally, the system may remain aperiodic,
which corresponds to an {\it incommensurate phase}.

The condition $\tilde{b}>9$ of Lemma \ref{l2} requires that $J_2>0$.
Denote
$$T_c={2J_2\over \ln 3}, \ \ J_2>0.$$

Lemma \ref{l2} gives the following
\begin{thm}\label{t1} If $T\geq T_c$ then the model (\ref{h}) has unique paramagnetic phase;
If $T< T_c$ then there are exactly three (resp. two) paramagnetic phases if $(J_1, J_2)$ is such that
$b^3\sqrt{\nu_1}<a^{-1}<b^3\sqrt{\nu_2}$ (resp. $a^{-1}=b^3\sqrt{\nu_1}$ or $a^{-1}=b^3\sqrt{\nu_2}$).
\end{thm}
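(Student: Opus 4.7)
The plan is to translate Lemma~\ref{l2} into the language of paramagnetic phases. First I would recall that by definition a paramagnetic phase corresponds to a fixed point $u^{*}\in M_{1}$, i.e.\ $u_{1}^{*}=u_{4}^{*}$ and $u_{2}^{*}=u_{3}^{*}$. Under this symmetry, as shown just before Lemma~\ref{l2}, the system $F(u)=u$ collapses to the single scalar equation $x=f(x)$ of~(\ref{e1}) with $x=u_{1}^{*}/u_{2}^{*}$. One still has to verify that each positive solution $x^{*}$ of $x=f(x)$ determines a unique fixed point in $M_{1}$: combining $u_{1}^{*}=x^{*}u_{2}^{*}$ with the first line of~(\ref{r1}) forces $u_{2}^{*}=x^{*}/\bigl(a(bx^{*}+b^{-1})^{2}\bigr)$, so positive paramagnetic phases are in bijection with positive solutions of $x=f(x)$.

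Next I would apply Lemma~\ref{l2} with the change of variables $\tilde{a}=a^{-2}b^{-6}$, $\tilde{b}=b^{4}$ recorded there. Since $b=\exp(J_{2}\beta)$ and $\beta=1/T$, one has
\[
\tilde{b}>9 \iff b^{4}>9 \iff e^{4J_{2}\beta}>9 \iff \beta>\frac{\ln 3}{2J_{2}} \iff T<\frac{2J_{2}}{\ln 3}=T_{c},
\]
which indeed presupposes $J_{2}>0$. Hence every $T\ge T_{c}$ falls into the regime $\tilde{b}\le 9$ of Lemma~\ref{l2} (as does the whole range $J_{2}\le 0$, since then $\tilde{b}\le 1$), in which $x=f(x)$ has a unique positive solution, giving a single paramagnetic phase.

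In the regime $T<T_{c}$ one has $\tilde{b}>9$, and Lemma~\ref{l2} distinguishes the cases by comparing $\tilde{a}$ with $\nu_{1}(\tilde{b})$ and $\nu_{2}(\tilde{b})$. Rewriting, since all quantities are positive,
\[
\nu_{1}(\tilde{b})<\tilde{a}<\nu_{2}(\tilde{b}) \iff \nu_{1}<a^{-2}b^{-6}<\nu_{2} \iff b^{3}\sqrt{\nu_{1}}<a^{-1}<b^{3}\sqrt{\nu_{2}},
\]
and analogously for the boundary equalities $\tilde{a}=\nu_{i}(\tilde{b})$, i.e.\ $a^{-1}=b^{3}\sqrt{\nu_{i}}$. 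Lemma~\ref{l2} then produces three positive solutions of $x=f(x)$ (hence three paramagnetic phases) in the open range, and exactly two on each endpoint, yielding the theorem.

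There is no substantive obstacle here: the proof is essentially a dictionary between the scalar dynamics of Lemma~\ref{l2} and the language of phases on the tree. The only mild care required is (i) the one-to-one correspondence between positive solutions of $x=f(x)$ and fixed points in $M_{1}$, which rests on $u_{2}^{*}$ being determined uniquely from $x^{*}$; and (ii) the algebraic bookkeeping that converts the thresholds $\tilde{a}=\nu_{i}(\tilde{b})$ into the thresholds $a^{-1}=b^{3}\sqrt{\nu_{i}}$ appearing in the statement.
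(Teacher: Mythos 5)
Your proposal is correct and matches the paper's route exactly: the paper derives Theorem~\ref{t1} directly from Lemma~\ref{l2} by observing that $\tilde{b}=b^{4}>9$ is equivalent to $T<T_{c}=2J_{2}/\ln 3$ (forcing $J_{2}>0$) and that $\tilde{a}=a^{-2}b^{-6}=\nu_{i}$ rewrites as $a^{-1}=b^{3}\sqrt{\nu_{i}}$. Your added verification that positive solutions of $x=f(x)$ are in bijection with fixed points in $M_{1}$ (via $u_{2}^{*}$ being determined by $x^{*}$) is a detail the paper leaves implicit but is consistent with the explicit formulas $u_{1}^{*}(x)$, $u_{2}^{*}(x)$ given later in Section~4.
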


For the condition $b<\sqrt{3^{-1}}$ of Lemma \ref{l3} we need to condition $J_2<0$.
In this case we have $T_c={-2J_2\over \ln 3}, \ \ J_2<0.$

From Lemma \ref{l3} we get the following
\begin{thm}\label{t2} If $T\geq T_c$ then the model (\ref{h}) (on $M_1$) has unique 2-commensurate phase;
If $T< T_c$ then there are exactly two (resp. one) 2-commensurate phases if $a^2\in (b^-_*,b^+_*)$ (resp. $a^2=b^-_*$ or $a^2=b^+_*$).
\end{thm}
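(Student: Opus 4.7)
The plan is to derive Theorem \ref{t2} as an almost direct corollary of Lemma \ref{l3}, via a translation of the parameters $(a,b)$ into the physical variables $(J_1,J_2,T)$ together with a short reduction of $2$-commensurate phases on $M_1$ to solutions of $f(f(x))=x$.

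First I would translate the threshold. Since $b = \exp(J_2/T)$ and the hypothesis $b < \sqrt{3^{-1}}$ of Lemma \ref{l3} forces $b<1$ (hence $J_2<0$), one computes
\[
b < \tfrac{1}{\sqrt{3}} \iff \tfrac{J_2}{T} < -\tfrac{1}{2}\ln 3 \iff T < \tfrac{-2J_2}{\ln 3} = T_c,
\]
so the regime $T \ge T_c$ (resp.\ $T < T_c$) in the theorem matches exactly the regime $b \ge 1/\sqrt{3}$ (resp.\ $b < 1/\sqrt{3}$) appearing in Lemma \ref{l3}.

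Next I would reduce $2$-commensurate phases on $M_1$ to positive solutions of $f(f(x))=x$. Since $M_1 = \{u_1=u_4,\ u_2=u_3\}$ is $F$-invariant, a $2$-commensurate phase corresponds to a fixed point of $F^{2}|_{M_1}$. Introducing the ratio $x = u_1/u_2$, the computation leading to (\ref{e1}) shows $x'=f(x)$, so every fixed point of $F^{2}|_{M_1}$ projects to a positive root of $f(f(x))=x$. Conversely, given such a root $x$, the magnitude $u_2$ is uniquely recovered from the first component of the fixed-point equation of $F^{2}$ (a routine algebraic lift analogous to the one used for genuine fixed points before Lemma \ref{l2}), so the correspondence is bijective. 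Factoring out the fixed points of $f$ identifies the nontrivial $2$-cycle contributions with the positive roots of the quadratic (\ref{x}).

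Finally I would count solutions by combining Lemmas \ref{l2} and \ref{l3}. For $T \ge T_c$ (so $b \ge 1/\sqrt{3}$ with $b\neq 1$), Lemma \ref{l3}(1) gives no nontrivial solutions of $f(f(x))=x$, while Lemma \ref{l2} furnishes a single fixed point of $f$ (since $\tilde b = b^4 < 1 < 9$ when $J_2<0$), producing the claimed unique $2$-commensurate phase. For $T < T_c$, applying Lemma \ref{l3}(3) in the open range $a^2 \in (b_*^-,b_*^+)$ yields two distinct positive roots $x_\pm$ of (\ref{x}), which lift to two $2$-commensurate phases; at the boundary $a^2=b_*^\pm$, Lemma \ref{l3}(2) gives the single root $x_1$ and hence a single phase. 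The main obstacle I anticipate is careful bookkeeping at the interface with Theorem \ref{t1}: one must track whether fixed points of $f$ already counted as paramagnetic phases should also be enumerated among the ``$2$-commensurate'' phases, and ensure positivity of the lifted magnitudes $(u_1,u_2)\in\mathbb{R}_+^2$ is preserved for every positive root of (\ref{x}). No analytic tool beyond Lemmas \ref{l2}--\ref{l3} is required; the argument is essentially a transcription.
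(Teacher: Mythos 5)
Your proposal is correct and follows essentially the same route as the paper, which states Theorem \ref{t2} as a direct consequence of Lemma \ref{l3} after translating the condition $b\leq\sqrt{3^{-1}}$ (with $J_2<0$) into $T\leq T_c=-2J_2/\ln 3$ and using the reduction of $F^2(u)=u$ on $M_1$ to $f(f(x))=x$ already set up in Section 3.2. The bookkeeping issue you flag (whether the unique fixed point of $f$ is what is being counted as the ``unique 2-commensurate phase'' when $T\geq T_c$) is real, but the paper leaves it at the same level of informality.
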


For a fixed temperature $T=\beta^{-1}<T_c$  we have two critical curves $a^2=b_*^{\pm}$ i.e., on terms of $J_1$ and $J_2<0$ they are
given by the following explicit relations
$$J_1= {1\over 2\beta}\ln\left(8^{-1}\{1-3e^{8J_2\beta}-6e^{4J_2\beta}\pm\sqrt{(e^{4J_2\beta}-1)^
3(9e^{4J_2\beta}-1)}\}\right)-3J_2.$$
Using Lemma \ref{l3} and formula (\ref{Z}) we can get explicit formulas for the sequence of periodic partition functions:
$$Z_n=Z_n(y)=2a^{-2/3}\times$$
$$\left\{\begin{array}{ll}
\left(\left(ab(b+{1\over by})^2+{1\over ab}({1\over b}+{b\over y})^2\right)^{-{2\over 3}}+a^{2\over 3}\left({a\over b}(by+{1\over b})^2+{b\over a}({y\over b}+b)^2\right)^{-{2\over 3}}\right)^2, \ \ n=2m\\[3mm]
 \left(\left(ab(b+{1\over bf(y)})^2+{1\over ab}({1\over b}+{b\over f(y)})^2\right)^{-{2\over 3}}+a^{2\over 3}\left({a\over b}(bf(y)+{1\over b})^2+{b\over a}({f(y)\over b}+b)^2\right)^{-{2\over 3}}\right)^2,  n=2m+1\\[3mm]
\end{array}\right.$$
where $m=0,1,2,\dots$; $y$ is one of $x_1,x_-,x_+$ defined in Lemma \ref{l3} and function $f$ is given in (\ref{e1}).

It is easy to see that if $x$ is a fixed point of $f$ then corresponding fixed point of $F$ has the form
$u^*(x)=(u^*_1(x),  u^*_2(x), u^*_2(x), u^*_1(x))$ with $u^*_1(x)=a^{-1}(b+(bx)^{-1})^{-2}$ and $u^*_2(x)=a(b+b^{-1}x)^{-2}$. If $y$ is a fixed point of $g$ then corresponding 2-periodic point of $F$ has the form
 $u^{\rm per}(y)=(u^{\rm per}_1(y),  u^{\rm per}_2(y), u^{\rm per}_2(y), u^{\rm per}_1(y))$ with
$$u^{\rm per}_1(y)=a^{-1/3}\left(ab(b+(by)^{-1})^2+(ab)^{-1}(by^{-1}+b^{-1})^2\right)^{-2/3},$$
$$u^{\rm per}_2(y)=a^{1/3}\left(a^{-1}b(b+b^{-1}y)^2+ab^{-1}(by+b^{-1})^2\right)^{-2/3}.$$
Lemma \ref{l5} gives
\begin{thm}\label{t3} The model (\ref{h}) (on $M_1$) has uncountable set $S$ of incommensurate phases $\mu_u$, where $u\in M_1\setminus ({\rm Fix}(F)\cup {\rm Per}_2(F))$. Moreover the set of incommensurate phases can be classified to (uncountable) subsets
$$S_x=\{\mu_u: \ \ u\in M_1\setminus ({\rm Fix}(F)\cup {\rm Per}_2(F))\ \ \mbox{with} \ \  \lim_{n\to\infty} F^n(u)=u^*(x)\},$$ where  $x$ is an attractive fixed point of $f$ and
$$S^{\rm per}_y=\{\mu_u: \ \ u\in M_1\setminus ({\rm Fix}(F)\cup {\rm Per}_2(F))\ \ \mbox{with} \ \  \lim_{n\to\infty} F^{2n}(u)=u^{\rm per}(y)\},$$ where  $y$ is an attractive fixed point of $g$.
\end{thm}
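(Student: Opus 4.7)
The plan is to read Theorem \ref{t3} as a structural consequence of Lemmas \ref{l4} and \ref{l5}, combined with the standard injection from admissible boundary laws $u\in R^4_+$ into Gibbs measures $\mu_u$ for the Hamiltonian (\ref{h}).

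First I would verify the aperiodicity that justifies the label \emph{incommensurate phase}. Fix any $u\in M_1\setminus({\rm Fix}(F)\cup{\rm Per}_2(F))$. If $J_2>0$, Lemma \ref{l5}(1) rules out $F^p(u)=u$ for every $p\geq 2$; if $J_2<0$, Lemma \ref{l5}(2) rules this out for every $p\geq 3$, while $p=2$ is excluded by hypothesis. Hence the trajectory $\{F^n(u)\}_{n\geq 0}$ is strictly aperiodic, which is exactly the definition of incommensurability given at the start of Section~4, and so $\mu_u\in S$.

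Next I would establish uncountability. The set $M_1$ is a two-parameter cone in $R^4_+$, while Lemmas \ref{l2} and \ref{l3} show that the restriction of ${\rm Fix}(F)\cup{\rm Per}_2(F)$ to $M_1$ is a finite set of isolated points (at most three fixed points lifted from $f$, plus at most two further $2$-periodic points coming from Lemma \ref{l3}). The complement is therefore uncountable, and since distinct boundary laws yield distinct Gibbs measures the map $u\mapsto\mu_u$ is injective, whence $S$ is uncountable.

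For the basin decomposition I would invoke Lemma \ref{l4}. When $b>1$, writing $x_n=u_1^{(n)}/u_2^{(n)}$ one has $x_{n+1}=f(x_n)$ and Lemma \ref{l4}(1) gives $x_n\to x^*$ for some attractive fixed point $x^*$ of $f$; placing $\mu_u$ in $S_{x^*}$ produces the first partition. When $b<1$, Lemma \ref{l4}(2) yields convergence of the even subsequence under $g=f\circ f$ to an attractive fixed point $y^*$ of $g$, and I would put $\mu_u$ in $S^{\rm per}_{y^*}$. To upgrade scalar convergence to $F^n(u)\to u^*(x^*)$ (respectively $F^{2n}(u)\to u^{\rm per}(y^*)$), I would parametrize $F|_{M_1}$ by the pair $(x_n,w_n)$ with $w_n=u_2^{(n)}$, noting that $w_{n+1}=a^{-1}(b+b^{-1}x_n)^2 w_n^2$ becomes asymptotically autonomous once $x_n\to x^*$ and admits the unique positive fixed point $w^*=u_2^*(x^*)$. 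Each basin is then open in $M_1$ by continuity of the iterates $F^n$ and contains a neighborhood of its attractor, so each $S_x$ and each $S^{\rm per}_y$ is uncountable.

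The main obstacle will be this last lifting step: promoting 1D convergence of $x_n$ under $f$ (or $y_n$ under $g$) to 2D convergence of $F^n(u)$ on $M_1$. The natural route is to linearize $F|_{M_1}$ at $u^*(x^*)$ and check that the Jacobian has both eigenvalues of modulus strictly less than one---one eigenvalue is $f'(x^*)<1$ from Lemma \ref{l2}, and the transverse eigenvalue, governing the magnitude direction $w$, must also be shown to be a contraction. Once hyperbolicity is verified, the Hartman--Grobman theorem (or a direct contraction estimate on the $w$-coordinate) yields local exponential convergence, which combined with the global scalar convergence from Lemma \ref{l4} delivers the full basin description asserted in the theorem.
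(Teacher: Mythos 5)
Your first two steps (aperiodicity via Lemma \ref{l5}, uncountability of $M_1\setminus({\rm Fix}(F)\cup{\rm Per}_2(F))$ by a cardinality count, classification via Lemma \ref{l4}) are exactly the route the paper intends --- indeed the paper offers nothing beyond the phrase ``Lemma \ref{l5} gives'', so up to that point you have filled in precisely the missing argument. The trouble is the final lifting step, which you rightly single out as the main obstacle but whose resolution you have guessed wrong. Using your own parametrization $(x_n,w_n)$ with $w_n=u_2^{(n)}$ and $w_{n+1}=a^{-1}(b+b^{-1}x_n)^2w_n^2$, the fixed value is $w^*=a(b+b^{-1}x^*)^{-2}=u^*_2(x^*)$, and
$$\frac{\partial w_{n+1}}{\partial w_n}\Big|_{(x^*,w^*)}=2a^{-1}(b+b^{-1}x^*)^2\,w^*=2 .$$
So the transverse eigenvalue is $2$, not a contraction: the Jacobian of $F|_{M_1}$ at $u^*(x^*)$ is triangular with eigenvalues $f'(x^*)<1$ and $2$, i.e.\ the fixed point is a saddle. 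The one-dimensional map $w\mapsto c^*w^2$ sends every $w_0<1/c^*$ to $0$ and every $w_0>1/c^*$ to $+\infty$, so for generic $u\in M_1$ the full vector $F^n(u)$ does not converge to $u^*(x^*)$ at all; only the ratio $x_n$ does. Consequently your claims that each basin ``is open in $M_1$'' and ``contains a neighborhood of its attractor'' are false, and Hartman--Grobman cannot be invoked in the form you describe.

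What survives is this: $S_{x^*}$ as literally defined is the stable manifold of the saddle, a one-dimensional graph over the basin of $x^*$ under $f$ (for each $x_0$ in that basin there is exactly one $w_0$ with $w_n\to w^*$, by monotonicity of $w_n$ in $w_0$), hence still uncountable --- but it is nowhere dense in $M_1$, and the union of the sets $S_x$ and $S^{\rm per}_y$ then does not exhaust $S$, contrary to the word ``classified''. Alternatively, one can read the theorem projectively (normalizing $u^{(n)}$ at each step, as Vannimenus does, or phrasing convergence purely in terms of the ratio $x_n$), in which case Lemma \ref{l4} alone gives the classification and no transverse analysis is needed. You must commit to one of these two readings; the transverse contraction you propose to verify does not exist.
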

\section{Concluding remarks} Usually, to describe phases (Gibbs measures) of a given Hamiltonian on a Cayley tree one
has correspondence between Gibbs measures and a collection of vectors (real numbers in some particular cases) $\{h_x, \, x\in V\}$,
which satisfy a non-linear equation (see for example, \cite{GR}, \cite{GR1}, \cite{PP}--\cite{RSu}, \cite{SR}). The recurrent equation (\ref{r}) considered in this paper (which was obtained in \cite{V})
describes a vector function $\{u^{(n)}, n\in N\}$ which is a particular case of the above mentioned function $h_x$ obtained as
$h_x=u^{(n)}$ if $x\in W_n$ i.e., depends only on number of the generation set where belongs $x$ but not on $x$ itself.
Thus the solutions to the recurrent equation (\ref{r}) do not fully describe phases of the model (\ref{h}). But deriving of the functional equation
for $h_x$ corresponding to the Hamiltonian (\ref{h}) is also difficult, since there is prolonged NNN interaction. Such model can be also studied by a contour argument (see \cite{Ro} and the references therein), but this argument does not give exact solutions, in general.

By a process of iteration, for the model (\ref{h}) Vannimenus found a new modulated phase, in addition to the expected paramagnetic and ferromagnetic (fixed point) phases and a $( + + - -)$ periodic (four cycle antiferromagnetic phase, which consisted of commensurate (periodic) and incommensurate (aperiodic) regions corresponding to so called "devil's staircase". In this paper, using theory of dynamical systems  we have analytically proved many above mentioned results, i.e., the following exact results are obtained:

{\it Paramagnetic phase}: The exact critical temperature and exact critical curves are found.
It is proven that the number of the paramagnetic phases can be at most three. (Theorem \ref{t1}).

{\it Ferromagnetic phase}: We reduced description of such phases to a quartic equation (i.e., solution of the equations on $M_2$).
But we were not able to study the periodic solutions on $M_2$.

{\it Commensurate phase}: The exact critical temperature (which is obtained from the critical temperature of the paramagnetic phase by replacing $J_2$ with $-J_2$) and exact critical curves are found. On the set $M_1$ it is proven that the number of the 2-commensurate phases can be at most two and there is not $p$-commensurate phases if $p\geq 3$ (Lemma \ref{l5}, Theorem \ref{t2}). We also described exact values of periodic partition functions.

{\it Incommensurate phase}: We proved that the model has uncountably many such phases. Moreover we classified them in two classes: the first class
contains the phases which are "asymptotically fixed" (set $S_x$); the second class contains the phases which are "asymptotically periodic" (set $S^{\rm per}_y$). Note that for the usual Ising model with external field on Cayley trees such infinitely many phases are known (see \cite{Ge}, p.250).   

\section*{Acknowledgements} UAR thanks the Scientific and Technological Research Council of
Turkey (TUBITAK) for support and Zirve and Harran Universities for kind hospitality. This work was completed in the Abdus Salam International
Center for Theoretical Physics (ICTP), Trieste, Italy and UAR thanks ICTP for providing
financial support and all facilities and IMU/CDE-program for travel support.
He also thanks the TWAS  Research Grant: 09-009 RG/Maths/As-I; UNESCO FR: 3240230333.

\vskip 0.3 truecm

\end{document}